\newtheorem{proposition}{Proposition}
\newtheorem{lemma}{Lemma}
\newtheorem{corollary}{Corollary}
\begin{document}

\title{Ultra-Reliable Short Message Cooperative Relaying Protocols under Nakagami-$m$ Fading}
\author{Parisa Nouri\IEEEauthorrefmark{0}, Hirley Alves\IEEEauthorrefmark{0}, Richard Demo Souza\IEEEauthorrefmark{1}, and Matti Latva-aho \IEEEauthorrefmark{0}\\
	\IEEEauthorblockA{
		\IEEEauthorrefmark{0}Centre for Wireless Communications (CWC), University of Oulu, Finland\\
	}
	\IEEEauthorblockA{
		\IEEEauthorrefmark{1} Federal University of Santa Catarina (UFSC), Brazil\\
	}

		firstname.lastname@oulu.fi,
		\IEEEauthorrefmark{1}richard.demo@ufsc.br
}
%
\maketitle

\begin{abstract}
	In the next few years, the development of wireless communication systems propel the world into a fully connected society where the Machine-type Communications (MTC) plays a substantial role as key enabler in the future cellular systems. MTC is categorized into mMTC and uMTC, where  mMTC provides the connectivity to massive number of devices while uMTC is related to low latency and ultra-high reliability of the wireless communications. This paper studies uMTC with incremental relaying technique, where the source and relay collaborate to transfer the message to a destination.
	In this paper, we compare the performance of two distinct cooperative relaying protocols with the direct transmission under the finite blocklength (FB) regime. We define the overall outage probability in each relaying scenario, supposing Nakagami-$m$ fading. We show that cooperative communication outperforms direct transmission under the FB regime. In addition, we examine the impact of fading severity and power allocation factor on the outage probability and the minimum delay required to meet the ultra-reliable communication requirements. Moreover, we provide the outage probability in closed form.
\end{abstract}

\begin{IEEEkeywords}
	Finite blocklength, Machine-type communications, Cooperative relaying transmission, Outage probability, Nakagami-$m$ fading.
\end{IEEEkeywords}
\section{Introduction}
The fifth generation (5G) of the cellular systems, which will be introduced in the early 2020s, outperforms the previous generations in terms of data rates and capacity, in addition to supporting new communication protocols to deal with demanding requirements such as latency, reliability and efficiency in MTC~\cite{tullberg2016metis}. In MTC or machine-to-machine communications (M2M), data is automatically exchanged between two MTC nodes or an MTC node and a server where the human cooperation is minimized~\cite{shariatmadari2015machine}. 
5G system outperforms current networks in terms of spectrum efficiency via decreasing the latency and guaranteeing reliability up to $99.999 \%$. Authors in~\cite{andrews2014will} discuss about the 5G technology in more details in terms of requirements and challenges such as data rate, latency, energy and cost issues. 
%
\subsection{Machine-Type Communication}
In recent years, MTC has  received much attention due to the vast applications in wireless communications via supporting transmission/reception of short data packets in comparison with the current communication systems which carry long data packets, and also being a cost-effective and energy efficient technology~\cite{lee2016packet}. 
MTC technology facilitates wide range of applications and is characterized as massive MTC (mMTC) and ultra-reliable MTC (uMTC)~\cite{bockelmann2016massive}. In mMTC, where the blocklength is short, a huge number of devices in a certain domain are covered with low-rate and low-power connectivity, and also considerable reliability in order to cover critical situations, e.g. in sensor networks, smart meters, actuators, etc,~\cite{7041045},~\cite{singh2016selective}. In addition, timing constraints from few seconds to even extremely low end-to-end deadlines in particular applications, are critical concerns in MTC~\cite{biral2016impact}. In uMTC, the connection is supported by transferring the short data packets with ultra-high reliability and low latency, in the scope of less than a millisecond which can be a requirement in several applications such as cloud connectivity, road safety, industrial control and safe interconnection between vehicles~\cite{bockelmann2016massive},~\cite{7529226}. Therefore, owing to the vast applicability of MTC with short packets in cellular network infrastructure, covering a novel  wireless mode, namely ultra-high reliable communication (URC) is a critical concern in 5G~\cite{schotten2014availability}. 
Supporting the ultra-high reliability and low latency are crucial requirements in the upcoming networks in order to improve the security, functionality and the ability of the interaction between different types of communication units such as human-to-human, human-to-machine or machine-to-machine which creates novel business models and applications~\cite{tullberg2016metis}.

In MTC with short packet transmissions, since the length of metadata and information bits are similar, an unsuccessful encoding of the metadata increases the performance loss in the wireless communication systems, particularly under the FB regime~\cite{7529226}. There are several works which have studied different aspects of FB coding. For instance, authors in~\cite{5452208}, provide a tight approximation of achievable coding rate for a specified outage probability under the FB transmissions since majority of the theoretical results assume infinite blocklength (IFB) codes. 
In~\cite{iscan2016comparison}, authors provide an overview of some coding schemes for FB which may be used in 5G. They indicate that the performance of wireless communications with short data packets considerably enhances via novel coding schemes with better minimum distance between the codewords; albeit, the decoders become computationally more sophisticated. Furthermore, authors in~\cite{park2012new}, study the impact of FB on the attainable coding rate. They propose a new power allocation strategy, called modified water-filling, over the AWGN channels. 
They succeed to maximize the lower bound of the achievable coding rate in comparison to the conventional water-filling technique in FB regime.
\subsection{Cooperative Relaying with Finite Blocklength}
In recent years, relaying has become an interesting and challenging research topic. The transmission from the source to the destination is improved with help of intermediate auxiliary nodes. Moreover, relaying is able to combat the wireless fading generated due to the multipath propagation through taking advantage of spatial diversity~\cite{zimmermann2005performance}. The most usual relaying protocol is decode-and-forward (DF), where the relay decodes, encodes and retransmits the message~\cite{zimmermann2005performance}.

Authors in~\cite{hu2015performance} investigate the performance of a two-phase relaying model under the FB and IFB regimes. They consider perfect CSI at the destination where the channel gains of the relaying link and the direct link are combined. They propose different schemes based on the different error scenarios where relaying has a performance advantage over the direct transmission (DT) under the FB regime, particularly when the blocklength is small. Authors in~\cite{swamy2015cooperative} propose cooperative relaying as a way to meet the high reliability and latency requirements.  
In~\cite{7569613}, authors examine the performance of a multi-relay DF protocol with FB under the perfect CSI and partial CSI. 
They indicate that with perfect CSI, the throughput of FB coding is higher than the throughput of IFB coding.
Furthermore, in our previous work~\cite{parisa2017}, we consider cooperative relaying scenarios with perfect CSI for Rayleigh fading channels. We illustrate how the reliability improves via relaying and how we can meet the URC requirements. We examine the probability of successful transmission as a function of the number of information bits and coded blocklength. Moreover, we provide an approximation to the outage probability in closed form. We show that relaying consumes less transmit power compared to DT in order to enable URC with FB coding.
\subsection{Our Contribution}
In this paper, we further investigate the reliability of relaying under the FB regime. We study protocols, namely selection combining (SC) and maximum ratio combining (MRC), where relaying outperforms DT. Furthermore, we illustrate the superiority of MRC over SC in terms of power consumption, latency and reliability under the FB regime and, different from~\cite{hu2015performance}, we present closed form expressions for the outage probability.
The following are considered the contributions of this paper.
\begin{itemize} 
	\item We provide the general expression of the outage probability for each relaying protocol considered in this work and incremental decode-and-forward relaying under general Nakagami-$m$ fading. Therefore, we generalize our previous work in~\cite{parisa2017}.
	\item We extend the work in~\cite{hu2016blocklength} by comparing the MRC scenario with SC relaying. Also, a generalized closed-form expression for the outage probability is attained, thus extending \cite{6888474}.
	\item We show the trade-off between information payload and blocklength under the UR region as a function of the quality of the link, which is reflected by the $m$ parameter of the Nakagami-$m$ distribution. We examine the minimum delay in relaying schemes required to perform in the UR region.
\end{itemize}

{\textbf{Notation}:} Throughout this paper, $f_{W}(\cdot)$ and $F_{W}(\cdot)$ are the probability density function (PDF) and cumulative distribution function (CDF) of a random variable  (RV) W, respectively. $\operatorname{Q}^{-1}(\cdot)$ represents the inverse of the $\operatorname{Q}$-function which is defined as  $\operatorname{Q(w)} = \int_{w}^{\infty} \tfrac{1}{\sqrt{2\pi}}\operatorname{e}^{-t^2/2}dt$~\cite[\S F.2]{miller2012probability}. The outage probability is denoted by  $\epsilon$ and $\operatorname{E[\cdot]}$ is the expectation. $\Gamma(\cdot)$, $\Gamma(\cdot,\cdot)$ and $\operatorname {\cal P}(\cdot,\cdot)$ are the Gamma function~\cite[\S 6.1.1]{abramowitz1964handbook}, incomplete Gamma function~\cite[\S 6.5.2]{abramowitz1964handbook} and Gamma regularized function~\cite[\S 6.5.1]{abramowitz1964handbook}, respectively. $_1F_{1}(a;b;z)$ denotes the regularized hypergeometric function~\cite[\S 15.1.1]{abramowitz1964handbook}.
\section{System Model} \label{sc:system_model}
Consider a source $S$, a destination $D$ and a relay $R$ as illustrated in Fig.~\ref{fig:System Model}. We consider a normalized distance between $S$ and $D$ as $d=1$m, and that $R$ can move in a straight line between $S$ and $D$, while the distance between $D$ and $R$ is denoted by $\beta$. The links denoted as $X$, $Y$ and $Z$ represent the $S$-$R$, $R$-$D$, and $S$-$D$ links respectively, and each transmission takes $n_i$ channel uses where $i \in \{S, R\}$. This means that $n_S$ channel uses for $S$ and $n_R$ channel uses for $R$, respectively. In this scenario, first $S$ sends data to $D$ and $R$ which is known as broadcasting phase. Thereafter, in the relaying phase, $R$ forwards to $D$ only if it decoded the message correctly~\cite{hu2015performance},~\cite{7569613}. 
The received signals are written as $y_{1}\!=\!
h_{1}x + w_{1}$ and $y_{2}\!=\! h_{2}x + w_{2}$ at $D$ and $R$ respectively, and if $R$ participates, the received signal at $D$ is $y_{3}\!=\! h_{3}x + w_{3}$, where $x$ is the transmitted signal with power $P$ and $w_i$ is the AWGN noise vector with power $N_0\!=\!1$ where $i\in \{X,Y,Z\}$. Nakagami-$m$ quasi static fading channels in the $S$-$D$, $S$-$R$ and $R$-$D$ links are denoted as $h_1$, $h_2$, and $h_3$, respectively.
\begin{figure}[b!]
	\centering
	\includegraphics[width=\columnwidth]{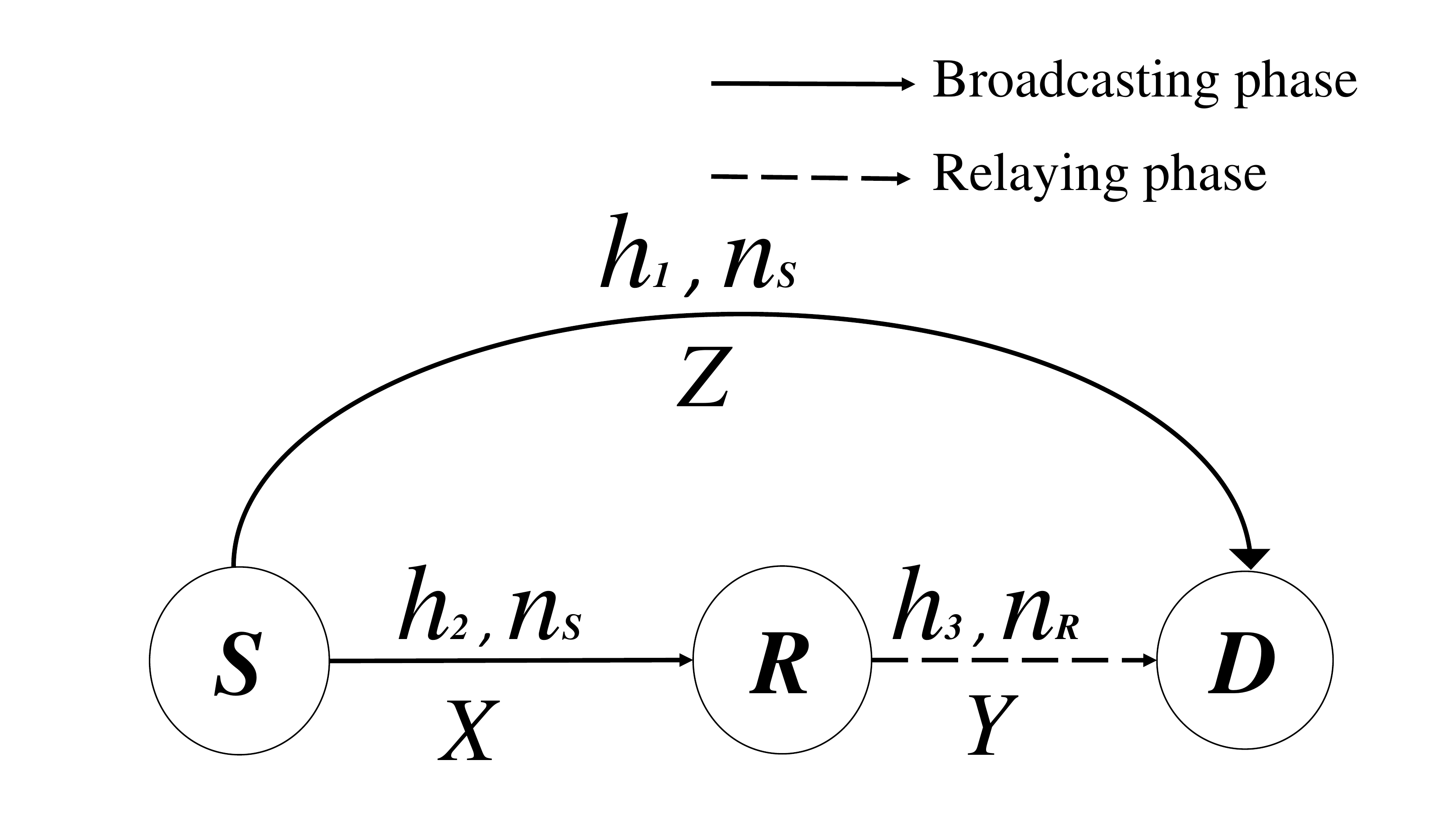}
	\caption{System model for the relaying scenario with a source ($S$), destination ($D$) and a relay ($R$). The links between $S$ to $D$, $S$ to $R$ and $R$ to $D$ are referred as the direct link $h_{1}$, the backhaul link $h_{2}$ and the relaying link $h_{3}$ each of which with $n_{i \in \{S, R\}}$ channel uses respectively.}
	\label{fig:System Model}
\end{figure}
In a DF-based cooperative transmission, the instantaneous SNR for each link depends on the total power constraint $P=P_S+P_R=\eta P+ (1-\eta)P$ which is given by $\Omega_Z \!=\!\eta P|h_{1}|^2/N_{0}$, $\Omega_X \!=\!\eta P|h_{2}|^2/N_{0}$ and $\Omega_Y \!=\! (1-\eta) P|h_{3}|^2/N_{0}$, where $0 <\eta\leqslant1$ and the average SNR in each link is $\gamma_Z =\eta P / N_{0}$, $\gamma_X =\eta P / N_{0}$ and $\gamma_Y = (1-\eta) P / N_{0}$. We consider $\eta$ as the power allocation factor in order to provide a fair comparison between DT and cooperative schemes. Moreover, note that the PDF of Nakagami-$m$ fading is a function of two parameters: the fading parameter $m$ and the scale parameter $\Omega> 0$~\cite{simon2005digital}. The squared envelop is then Gamma distributed as $X \thicksim \mathsf{Gamma}(m_{i} , \Omega_{i} /m_{i} ) $ where $i\in \{X,Y,Z\}$. 

\subsection{Coding Rate and Error Probability at Finite Blocklength } \label{sc:Finite Block length}
In this section,  we revisit the concept of coding rate under the FB regime. In a communication system, first $k$ information bits are mapped to a sequence known as the codeword with a blocklength of $n$ symbols. Thereupon, the generated codeword is sent through the wireless channel. Subsequently, the decoder maps the channel outputs into an estimate of the information bits. Therefore, we can define the coding rate ${\cal R}$ as the ratio of the information bits $k$ to the number of channel symbols $n$ as ${\cal R} \!=\! k/n$~\cite{7529226}. The maximum coding rate ${\cal R}^*(n,\epsilon)$ in bits per channel use (bpcu) is~\cite{7529226}
\begin{equation} \label{eq:maximum rate}
{\cal R}^*(n,\epsilon) = C(\rho) - \sqrt{\frac {V(\rho)}{n}}\operatorname{Q}^{-1}(\epsilon)\log_{2}\operatorname{e},
\end{equation}
where $C$ and $V$ are the positive channel capacity and the channel dispersion, respectively, defined as $C(\rho) = \log_{2}(1+\rho)$ and $V(\rho) = \rho (2+\rho) \big/(1+\rho)^2$, where $\rho$ is the average SNR as $P/N_0$.
For the AWGN channel, $h_{i}=1$ and $\frac{1}{n}\sum_{i}^{n}|x_{i}|^2\leq\rho$ holds~\cite{7529226}. 
%
\begin{figure*}[!t]
	\begin{equation}\label{eq:outage_naka}
	\epsilon\!=\! \dfrac{\mu\bigg[\theta\left(\Gamma\left(m,\dfrac{m\varrho}{\Omega}\right)\!-\!\Gamma\left(m,\dfrac{m\vartheta}{\Omega}\right)\right)\!+\!\Omega \bigg(\Gamma\left(1\!+\!m,\dfrac{m\vartheta}{\Omega}\right)
		\!-\!\Gamma\bigg(1\!+\!m,\dfrac{m\varrho}{\Omega}\bigg)\bigg)\bigg]}{\sqrt{2\pi}}\!+\!\dfrac{\operatorname{\cal P}\left(m,\dfrac{m\vartheta}{\Omega}\right)\!+\!\operatorname{\cal P}\left(m,\dfrac{m\varrho}{\Omega}\right)}{2}.
	\end{equation}
	\hrule
\end{figure*} 

%
%

The outage probability is then the expectation over the instantaneous SNR distribution as follows\footnote{This approximation is accurate for $n$ > 100, as proved for AWGN channels \cite[Figs. 12 and 13]{polyanskiy2010channel}, as well as for fading channels~\cite{yang2014quasi}.}~\cite{hu2016blocklength}
\begin{equation} \label{outage_fading}
\epsilon\approx \operatorname{E}\Bigg[\operatorname{Q}\Bigg(\sqrt{n}\frac{C(\rho|h|^2)-{\cal R}^{*}(n,\epsilon) }{\sqrt{V(\rho|h|^2)}}\Bigg)\Bigg].
\end{equation}
\subsection{Closed-form Expression of the Outage Probability} \label{Q-function}
The outage probability in (\ref{outage_fading}) does not have a closed-form expression, but it can be tightly approximated as we shall see next.
Let us first define $g(x)\!=\!\sqrt{n}\tfrac{C(\rho)-\cal R}{\sqrt{V(\rho)}}$. Then, we resort to a linearization of the Q-function~\cite{7106474},~\cite{6888474}. 
\begin{equation}\label{eq:W(t)}
K(x) \approx \operatorname{Q}(g(x)) =
\left\{
\begin{array}{@{}ll@{}}
\ 1 & x\leqslant\varrho\\
\ \dfrac{1}{2}-\dfrac{\mu}{\sqrt{2\pi}}(x-\theta) & \varrho<x< \vartheta  \;\;\;\;,\\
\ 0 & x\geq \vartheta
\end{array}\right.
\end{equation}
where, $\theta = \tfrac{2^{\cal R}-1}{P}$, $\vartheta=\theta+\sqrt{\tfrac{\pi}{2}\mu^{-2}}$, $\varrho=\theta-\sqrt{\tfrac{\pi}{2}\mu^{-2}}$ and $\mu\!=\!\sqrt{\frac{n}{2\pi}}(e^{2\cal R}-1)^{-\tfrac{1}{2}}$.
Therefore, the outage probability in (\ref{outage_fading}) is reformulated as
\begin{equation}\label{outage_fading_linear}
\epsilon= \operatorname{E}[\operatorname{Q}\left(g(x)\right)] = \int_{0}^{\infty}K(x)f_{X}(x)dx,
\end{equation}
where $f_{X}(x)$ represents the PDF of the SNR of link $X$.
\begin{proposition}\label{propose1}
The approximated outage probability of a communication link following Nakagami-$m$ fading is given in closed-form in (\ref{eq:outage_naka}) on top of the next page, where $\varrho$, $\vartheta$, $\mu$ and $\theta$ are defined in (\ref{eq:W(t)}).
\end{proposition}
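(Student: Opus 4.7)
The plan is to start from the linearized representation in (\ref{outage_fading_linear}), namely
$\epsilon=\int_{0}^{\infty}K(x)f_{X}(x)\,dx$, and exploit the piecewise structure of $K(x)$ in (\ref{eq:W(t)}) to split the integral at the breakpoints $\varrho$ and $\vartheta$. Since the squared Nakagami-$m$ envelope is Gamma-distributed with shape $m$ and scale $\Omega/m$, I would substitute
$f_{X}(x)=\frac{(m/\Omega)^{m}}{\Gamma(m)}x^{m-1}e^{-mx/\Omega}$
and handle the three regions separately: the tail $x\ge\vartheta$ contributes zero; the head $x\le\varrho$ contributes exactly the CDF $F_X(\varrho)=\mathcal{P}(m,m\varrho/\Omega)$; and only the middle slab $(\varrho,\vartheta)$ requires real work.

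For the middle slab I would rewrite $K(x)$ as $\bigl(\tfrac12+\tfrac{\mu\theta}{\sqrt{2\pi}}\bigr)-\tfrac{\mu}{\sqrt{2\pi}}x$ and split the integral into two pieces. The first piece is again a CDF difference and yields $\bigl(\tfrac12+\tfrac{\mu\theta}{\sqrt{2\pi}}\bigr)\bigl[\mathcal{P}(m,m\vartheta/\Omega)-\mathcal{P}(m,m\varrho/\Omega)\bigr]$. The second piece is the truncated first moment $\int_{\varrho}^{\vartheta}x\,f_{X}(x)\,dx$; I would evaluate this by the change of variable $t=mx/\Omega$, which converts the integrand into $t^{m}e^{-t}$ (up to constants) and therefore yields a difference of upper incomplete Gamma functions of order $m+1$, namely $\tfrac{\Omega}{m\Gamma(m)}\bigl[\Gamma(m+1,m\varrho/\Omega)-\Gamma(m+1,m\vartheta/\Omega)\bigr]$. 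Together with the constant piece, this accounts for the $\mu/\sqrt{2\pi}$-scaled bracket in (\ref{eq:outage_naka}).

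Finally, I would collect terms. The $F_X(\varrho)$ contribution from the head, combined with the $\tfrac12\bigl[F_X(\vartheta)-F_X(\varrho)\bigr]$ contribution extracted from the middle slab, consolidates into $\tfrac12\bigl[\mathcal{P}(m,m\vartheta/\Omega)+\mathcal{P}(m,m\varrho/\Omega)\bigr]$, matching the last additive term of (\ref{eq:outage_naka}). The remaining $\mu\theta/\sqrt{2\pi}$ piece can be written in the form $\Gamma(m,m\varrho/\Omega)-\Gamma(m,m\vartheta/\Omega)$ by using the identity $\mathcal{P}(m,z)=1-\Gamma(m,z)/\Gamma(m)$, making the $\theta$-term in (\ref{eq:outage_naka}) appear naturally alongside the $\Omega$-term coming from the first-moment integral.

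The routine substitutions and CDF identities are essentially mechanical; the only step that requires care is the first-moment integral on $(\varrho,\vartheta)$, where one must correctly shift the shape parameter from $m$ to $m+1$ and track the sign flip caused by $\Gamma(m+1,\cdot)$ being decreasing in its second argument, so that the final bracket reads $\Gamma(1+m,m\vartheta/\Omega)-\Gamma(1+m,m\varrho/\Omega)$ rather than its negation. Beyond that, assembling the pieces and matching the common factor $\mu/\sqrt{2\pi}$ is bookkeeping.
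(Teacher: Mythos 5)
Your proposal is correct and follows essentially the same route as the paper's Appendix~A: substitute the piecewise linearization (\ref{eq:W(t)}) and the Gamma PDF into (\ref{outage_fading_linear}), split the integral at $\varrho$ and $\vartheta$, and evaluate the constant and first-moment pieces via incomplete Gamma functions of orders $m$ and $m+1$. Your bookkeeping is in fact slightly more careful than the paper's final display: the truncated first moment carries the prefactor $\Omega/(m\Gamma(m))=\Omega/\Gamma(1+m)$ and the $\theta$-term the prefactor $1/\Gamma(m)$, which appear in the paper's intermediate step (\ref{eq:Outage_overall}) but are silently absorbed (they vanish only for $m=1$) in the final form (\ref{eq:outage_naka}).
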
 
\begin{proof}
	See Appendix A.
\end{proof}
\begin{corollary}
	Notice that for $m=1$, (\ref{eq:outage_naka}) reduces to the approximated outage probability for the Rayleigh fading as ~\cite[\S 5]{parisa2017}.
\end{corollary}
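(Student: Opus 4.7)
The plan is a direct specialization of (\ref{eq:outage_naka}) at $m=1$, since Nakagami-$m$ with $m=1$ is Rayleigh fading (the squared envelope is exponential with mean $\Omega$). There is essentially no new analysis to perform: the incomplete-Gamma and regularized-Gamma functions appearing in (\ref{eq:outage_naka}) degenerate to elementary exponentials, and the resulting formula must coincide with the Rayleigh closed form in~\cite[\S 5]{parisa2017}.

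Concretely, I would invoke the standard identities $\Gamma(1,z)=e^{-z}$, $\Gamma(2,z)=(1+z)e^{-z}$, and $\operatorname{\cal P}(1,z)=1-e^{-z}$, applied with $z\in\{\varrho/\Omega,\,\vartheta/\Omega\}$ after setting $m=1$ in every argument $m\varrho/\Omega$ and $m\vartheta/\Omega$. Substituting these into (\ref{eq:outage_naka}) turns the first bracket into $\theta\bigl(e^{-\varrho/\Omega}-e^{-\vartheta/\Omega}\bigr)$ and the second into $\Omega\bigl(e^{-\vartheta/\Omega}-e^{-\varrho/\Omega}\bigr)+\vartheta e^{-\vartheta/\Omega}-\varrho e^{-\varrho/\Omega}$. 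Using the definitions $\theta=(\varrho+\vartheta)/2$ and $\vartheta-\varrho=\sqrt{2\pi/\mu^{2}}$ from (\ref{eq:W(t)}), the polynomial coefficients of $e^{-\varrho/\Omega}$ and $e^{-\vartheta/\Omega}$ collapse to a clean difference of exponentials. The regularized-Gamma contribution simplifies simultaneously to $1-\tfrac{1}{2}\bigl(e^{-\varrho/\Omega}+e^{-\vartheta/\Omega}\bigr)$. Combining the two pieces yields the Rayleigh expression reported in~\cite[\S 5]{parisa2017}.

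The main obstacle is purely bookkeeping: the two incomplete-Gamma brackets in (\ref{eq:outage_naka}) are written with opposite orderings of $\varrho$ and $\vartheta$, so the signs must be tracked carefully; and the factor $1/2$ in front of the $\operatorname{\cal P}$ terms must be absorbed correctly when the constants $\theta$, $\varrho$, $\vartheta$ and $\mu$ are re-expressed through (\ref{eq:W(t)}). Once these elementary algebraic simplifications are organized, the reduction to the Rayleigh outage probability of~\cite[\S 5]{parisa2017} is immediate, which establishes the corollary.
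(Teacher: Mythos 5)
Your proposal is correct and follows exactly the route the paper intends: the paper states this corollary without any written proof (it is offered as a ``Notice that...'' observation), and the intended justification is precisely your direct specialization of (\ref{eq:outage_naka}) at $m=1$ via the elementary identities $\Gamma(1,z)=e^{-z}$, $\Gamma(2,z)=(1+z)e^{-z}$, $\operatorname{\cal P}(1,z)=1-e^{-z}$, together with the relations $\theta=(\varrho+\vartheta)/2$ and $\vartheta-\varrho=\sqrt{2\pi}/\mu$ from (\ref{eq:W(t)}), which collapse the expression to $1-\tfrac{\mu\Omega}{\sqrt{2\pi}}\bigl(e^{-\varrho/\Omega}-e^{-\vartheta/\Omega}\bigr)$, the Rayleigh form of~\cite[\S 5]{parisa2017}. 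Your bookkeeping of the signs and of the factor $\tfrac{1}{2}$ in the regularized-Gamma terms is accurate, so the verification is complete as written.
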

\noindent Thus, our proposed approximation in (\ref{eq:outage_naka}) includes~\cite[\S 5]{parisa2017}, which was first reported in~\cite{6888474}, as a special case.
\section{Performance Analysis of Relaying}\label{sec:performance}
In this section, we examine the outage probability in all relaying schemes considered in this work, as well as for the case of DT which is used for comparison purposes. 
\subsection{Cooperative Transmission} \label{sec:Cooperative schemes}
\subsubsection{Selection Combining (SC)}
In this protocol $D$ tries to decode the messages sent by both $S$ and $R$, separately.
The overall outage probability becomes~\cite{5956530},~\cite{alves2012throughput}	
\begin{equation}
\epsilon_{SC}=\epsilon_{Z}\left(\epsilon_{X}+\left(1-\epsilon_{X}\right)\epsilon_{Y}\right),	
\end{equation}
where $\epsilon_{Z}$ , $\epsilon_{X}$ and $\epsilon_{Y}$ are the outage probabilities of the $S$-$D$, $S$-$R$ and $R$-$D$ links, respectively. The outages $\epsilon_{Z}$, $\epsilon_{X}$ and $\epsilon_{Y}$ are equal to (\ref{eq:outage_naka}) with $m$ and $\Omega$ replaced by $m_{Z}$ and $\Omega_{Z}$, $m_{X}$ and $\Omega_{X}$, and $m_{Y}$ and $\Omega_{Y}$ for each link, respectively.
\subsubsection{Maximum Ratio Combining (MRC)}\label{sec:MRC}
In this protocol, the transmissions from the source and from the relay are coherently combined at the receiver. Hence, the instantaneous SNR after the source and relay transmissions is $\Omega_{W} = \Omega_{Z} + \Omega_{Y}$~\cite{5956530},~\cite{alves2012throughput}. The  outage probability is~\cite{alves2012throughput}
\begin{align}\label{eq:MRC_outage}
\epsilon_{MRC} =\epsilon_{Z}\left(\epsilon_{X}\!+\!\left(1\!-\!\epsilon_{X}\right)\frac{\epsilon_{SRD}}{\epsilon_{Z}}\right),
\end{align}
where $\epsilon_{SRD}$ is the outage probability after MRC of the transmissions from the source to the destination. Moreover, the ratio $\tfrac{\epsilon_{SRD}}{\epsilon_{Z}}$ comes from the conditioning of $\epsilon_{SRD}$ on the fact that the transmission from the source to the destination failed.
In order to calculate (\ref{eq:MRC_outage}), first we need to define the PDF of $W$, a Gamma random variable equal to $Z + Y$ , the sum of the instantaneous SNRs of the $S$-$D$ and $R$-$D$ links, as follows.
%
PDF of random variable $W$ is
\begin{equation}\label{pdf_mrc}
\begin{split}
f_{W}(w)&=\operatorname{exp}\left(-\dfrac{w}{\Omega_{y}}\right)w^{-1+m_{z}+m_{y}}\Omega_{z}^{-m_{z}}\Omega_{y}^{-m_{y}}\\ &\times _1F_{1}\bigg(m_{z},m_{z}+m_{y},w\left(\dfrac{1}{\Omega_{y}}-\dfrac{1}{\Omega_{z}}\right)\bigg),
\end{split}
\end{equation}
{\noindent where $\Omega_{Z}$, $\Omega_{Y}$  and $m_{Z}$, $m_{Y}$ are the instantaneous SNRs and the fading severity of the $S$-$D$ and $R$-$D$ links, respectively.} 
\begin{proposition}
The outage probability after combining the source and relay transmissions, $\epsilon_{SRD}$, is 
\begin{equation}\label{mrc_SRD}
\begin{split}	\epsilon_{SRD}&=\int_{0}^{\varrho}f_W{\left(w|m,\Omega\right)}dw+\int_{\varrho}^{\vartheta}\left(\tfrac{1}{2}\!-\!\tfrac{\mu}{\sqrt{2\pi}}(w\!-\!\theta)\right)\\
&\times f_W{\left(w|m,\Omega\right)}dw. 
\end{split}
\end{equation}
where, $\vartheta$ and $\varrho$ are specified in (\ref{eq:W(t)}) 
\begin{proof}
By plugging (\ref{pdf_mrc}) into (\ref{outage_fading_linear}) and multiplying with $K(x)$, we attain (\ref{mrc_SRD}). Solving (\ref{mrc_SRD}) is difficult and it does not have a closed form expression. Hence, we only work with its numerical integration whenever $m>1$.
\end{proof}
\end{proposition}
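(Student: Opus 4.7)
The plan is to derive the expression for $\epsilon_{SRD}$ by directly substituting the piecewise linearization of the Q-function into the general outage formula (\ref{outage_fading_linear}), with the relevant random variable being $W=Z+Y$, the sum of the instantaneous SNRs after maximum ratio combining. Since the PDF $f_W(w\mid m,\Omega)$ for this sum of Gamma-distributed variables is already furnished in (\ref{pdf_mrc}), the proof reduces to a careful bookkeeping exercise rather than a distributional derivation.

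Concretely, I would first invoke (\ref{outage_fading_linear}) with $X$ replaced by $W$, writing $\epsilon_{SRD}=\int_0^\infty K(w)\,f_W(w\mid m,\Omega)\,dw$. Next, I would split the support of the integral according to the three branches of $K(x)$ defined in (\ref{eq:W(t)}): on $[0,\varrho]$ the integrand equals $f_W(w\mid m,\Omega)$ since $K\equiv 1$; on $(\varrho,\vartheta)$ the integrand equals the linear expression $\bigl(\tfrac{1}{2}-\tfrac{\mu}{\sqrt{2\pi}}(w-\theta)\bigr)f_W(w\mid m,\Omega)$; and on $[\vartheta,\infty)$ the integrand vanishes because $K\equiv 0$. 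Summing the two surviving contributions yields exactly (\ref{mrc_SRD}).

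The only subtle point is noting that the third piece contributes zero (so it can be silently dropped), and that the piecewise approximation $K(x)$ is well-defined and integrable against $f_W$ since $f_W$ is a bona fide PDF on $[0,\infty)$. Everything else is a direct substitution: the Gamma-mixture structure of $f_W$, expressed through the confluent hypergeometric $_1F_1$, does not interact with the splitting of the domain, so no distributional manipulation is required at this stage.

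The true obstacle, which the authors flag immediately after the statement, is not establishing (\ref{mrc_SRD}) but \emph{evaluating} it in closed form. The presence of $_1F_1\bigl(m_z,m_z+m_y,w(1/\Omega_y-1/\Omega_z)\bigr)$ inside both the constant-weight integral over $[0,\varrho]$ and the linear-weight integral over $(\varrho,\vartheta)$ obstructs a clean antiderivative when $m>1$; the integrals involve incomplete integrals of $w^{m_z+m_y-1}e^{-w/\Omega_y}\,{}_1F_1(\cdot)$ and its first moment, which in general do not collapse to elementary or standard special functions. For this reason I would, as the authors do, stop at (\ref{mrc_SRD}) and recommend numerical quadrature for $m>1$, noting that the Rayleigh special case $m_z=m_y=1$ reduces $_1F_1$ to an exponential and recovers the tractable expressions already reported.
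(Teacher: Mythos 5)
Your proposal is correct and follows exactly the paper's own argument: substitute the MRC sum-SNR density $f_W$ from (\ref{pdf_mrc}) into (\ref{outage_fading_linear}), split the integral over the three branches of $K(\cdot)$ in (\ref{eq:W(t)}) (the piece on $[\vartheta,\infty)$ vanishing), and stop at (\ref{mrc_SRD}) with numerical evaluation for $m>1$. You merely make explicit the domain-splitting and the reason the $_1F_1$ term blocks a closed form, which the paper states more tersely.
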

Since (\ref{mrc_SRD}) does not have a closed form expression for the outage probability under general Nakagami-$m$, we analyze the particular case when $m=1$, which corresponds to Rayleigh fading. Therefore, the PDF of $W$ becomes~\cite[\S 10]{parisa2017}.

\begin{lemma}\label{proposition2}
	For $m=1$, the outage probability, $\epsilon_{SRD}$, is equal to~\cite[\S 6]{parisa2017}. 
\end{lemma}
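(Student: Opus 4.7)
The plan is to specialize the density $f_{W}(w)$ in (\ref{pdf_mrc}) to the Rayleigh case $m_{Z}=m_{Y}=1$ and then evaluate (\ref{mrc_SRD}) in closed form. Using the identity $_1F_{1}(1;2;z)=(e^{z}-1)/z$, the expression in (\ref{pdf_mrc}) collapses to the hypoexponential density $f_{W}(w)=(e^{-w/\Omega_{Z}}-e^{-w/\Omega_{Y}})/(\Omega_{Z}-\Omega_{Y})$ for $w\geq 0$, i.e., the density of the sum of two independent exponential random variables with means $\Omega_{Z}$ and $\Omega_{Y}$. This matches the PDF used in our prior work for the MRC output under Rayleigh fading, so the rest of the argument is a direct computation.

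With this simplification in hand, I would split (\ref{mrc_SRD}) into three elementary pieces: the CDF $F_{W}(\varrho)=\int_{0}^{\varrho}f_{W}(w)\,dw$; the truncation $\int_{\varrho}^{\vartheta}f_{W}(w)\,dw=F_{W}(\vartheta)-F_{W}(\varrho)$; and the first-moment contribution $\int_{\varrho}^{\vartheta}w\,f_{W}(w)\,dw$. Each reduces to antiderivatives of $e^{-w/\Omega}$ and $w\,e^{-w/\Omega}$, the latter handled via the standard identity $\int w\,e^{-w/\Omega}\,dw=-\Omega(w+\Omega)e^{-w/\Omega}$, evaluated at the endpoints $\varrho$ and $\vartheta$. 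Assembling these with the coefficients $\tfrac{1}{2}$, $\tfrac{\mu\theta}{\sqrt{2\pi}}$ and $-\tfrac{\mu}{\sqrt{2\pi}}$ coming from the linearization in (\ref{eq:W(t)}) gives a closed form that is a linear combination of terms of the form $\Omega_{i}e^{-\varrho/\Omega_{i}}$, $\Omega_{i}(\varrho+\Omega_{i})e^{-\varrho/\Omega_{i}}$, and their counterparts at $\vartheta$, for $i\in\{Z,Y\}$.

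Finally, I would match the resulting expression term-by-term with the closed form reported in~\cite[\S 6]{parisa2017}; after collecting like terms and using $\vartheta-\varrho=\sqrt{2\pi/\mu^{2}}$ together with $\theta=(\varrho+\vartheta)/2$, the two expressions should coincide. The only subtlety is the degenerate case $\Omega_{Z}=\Omega_{Y}$, where the hypoexponential form develops a $0/0$ singularity and must be replaced by the Erlang-$2$ density $f_{W}(w)=w\,e^{-w/\Omega}/\Omega^{2}$; since the generic expression is continuous in $\Omega_{Y}$, this is handled by taking the limit $\Omega_{Y}\to\Omega_{Z}$ rather than by a separate derivation. Thus, the main obstacle is not analytic but bookkeeping: reorganizing the many exponential and linear-exponential terms into the compact grouping used in~\cite[\S 6]{parisa2017}, which establishes the claimed equivalence.
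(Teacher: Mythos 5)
Your proposal is correct and follows essentially the same route as the paper, which simply delegates this computation to~\cite[\S 6]{parisa2017}: there, as here, the $m=1$ MRC density is the sum-of-two-exponentials (hypoexponential) law, and the outage integral (\ref{mrc_SRD}) is evaluated termwise against the piecewise linearization (\ref{eq:W(t)}), yielding elementary exponential and linear--exponential antiderivatives. The details you add --- deriving that density directly from (\ref{pdf_mrc}) via the identity $_1F_{1}(1;2;z)=(e^{z}-1)/z$, and handling the degenerate case $\Omega_{Z}=\Omega_{Y}$ by the limiting Erlang-$2$ form --- are left implicit by the paper but are exactly the right bookkeeping.
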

\begin{proof}
	The proof can be found in~\cite{parisa2017}.	
\end{proof}
\subsection{Direct Transmission (DT)}
In DT, the source sends the data packets directly to the destination. Since the channel coefficients are Nakagami-$m$ distributed, the instantaneous  SNR of DT scheme is $\Gamma_{Z} \thicksim \mathsf{Gamma}(m_{Z}, \Omega_{Z} /m_{Z})$. Then, the outage probability of DT is calculated according to (\ref{eq:outage_naka}), where $m = m_{Z}$, $\Omega_{Z} = P_S/ N_0 $, and is denoted as $\epsilon_{DT}$.
\section{Numerical Results}\label{sc:result}
In this paper we study the performance of incremental relaying protocols under the FB regime for the Nakagami-$m$ fading. First, we compare the cooperative schemes with DT as a function blocklength and transmit power. We show that the cooperative relaying protocols outperform the DT in terms of transmit power and the blocklength under the UR region. 
The accuracy of our analytical model is proved via the numerical results. Unless stated otherwise, we assume $n=500$ where $n_{S}=n_{R}=n$, $k=250$, average SNR as $10$ dB and that $R$ is exactly midway between $S$ and $D$, with $\beta=\tfrac{1}{2}$.
\begin{figure}[!t]
	\centering
	\includegraphics[width=\columnwidth]{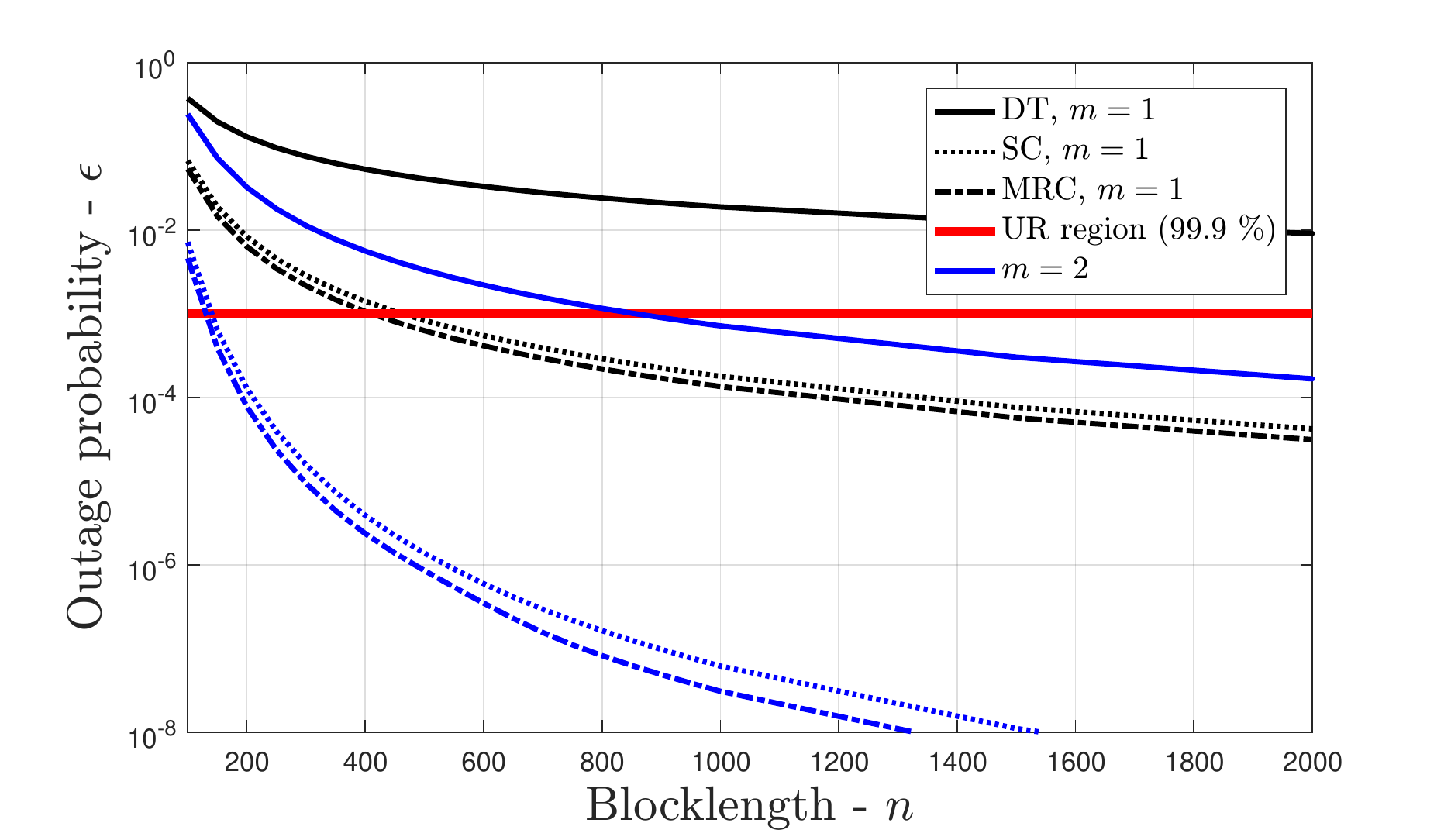}
	\vspace{-2mm}
	\caption{Outage probabilities for DT, SC, MRC with $k=250$ and average SNR as 10 dB.}\label{fig:outageVSblocklength}
\end{figure}
\subsection{On the impact of reliability improvement}
Fig.\ref{fig:outageVSblocklength} compares two distinct relaying protocols with DT in terms of the overall outage probability and fading severity. It can be clearly seen that MRC and SC outperform DT and perform closely in the entire range. In addition, higher values of $m$ results in lower outage
probability due to the improvement of LOS. Therefore, the availability of LOS should be taken into account when designing and deploying networks with stringent reliability requirements under the FB regime.
\begin{figure}[!htp]
	\centering
	\includegraphics[width=\columnwidth]{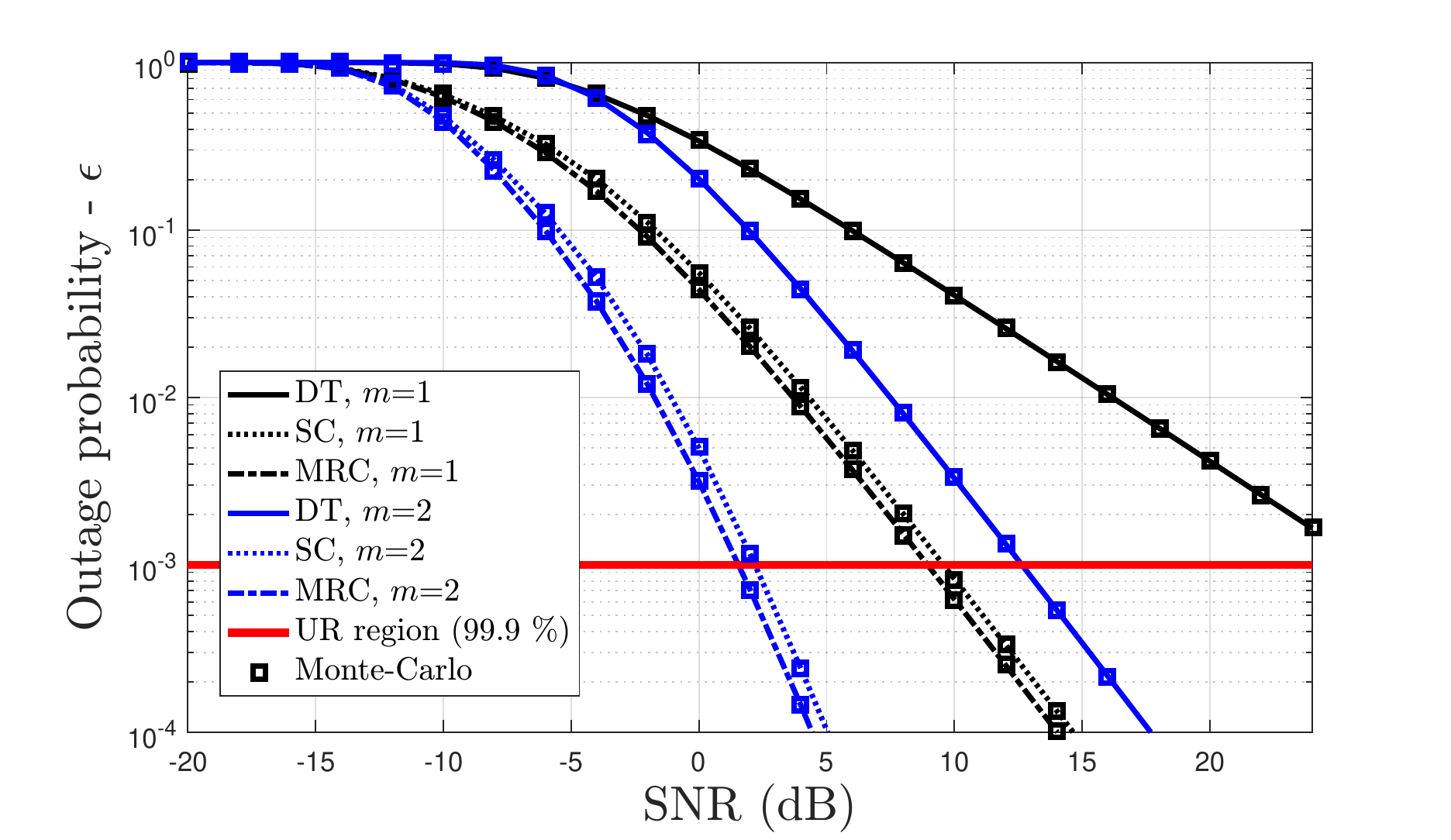}
	\vspace{-2mm}
	\caption{Outage probabilities for DT, SC, MRC with $k=250$ and $n=500$. }\label{fig:outageVSpower}
\end{figure}
Fig.\ref{fig:outageVSpower} shows that the cooperative protocols outperform DT in terms of transmit power consumption. The diversity gain achieved by the relaying schemes is more evident at high SNR and MRC performs better than SC. Hence, URC becomes feasible in case of less severe fading (larger $m$) and/or by
utilizing cooperative schemes. Note also that the analytical and numerical results match very well.
\begin{figure*}[!t]
	\centering
	\includegraphics[width=2\columnwidth,height=1.9in]{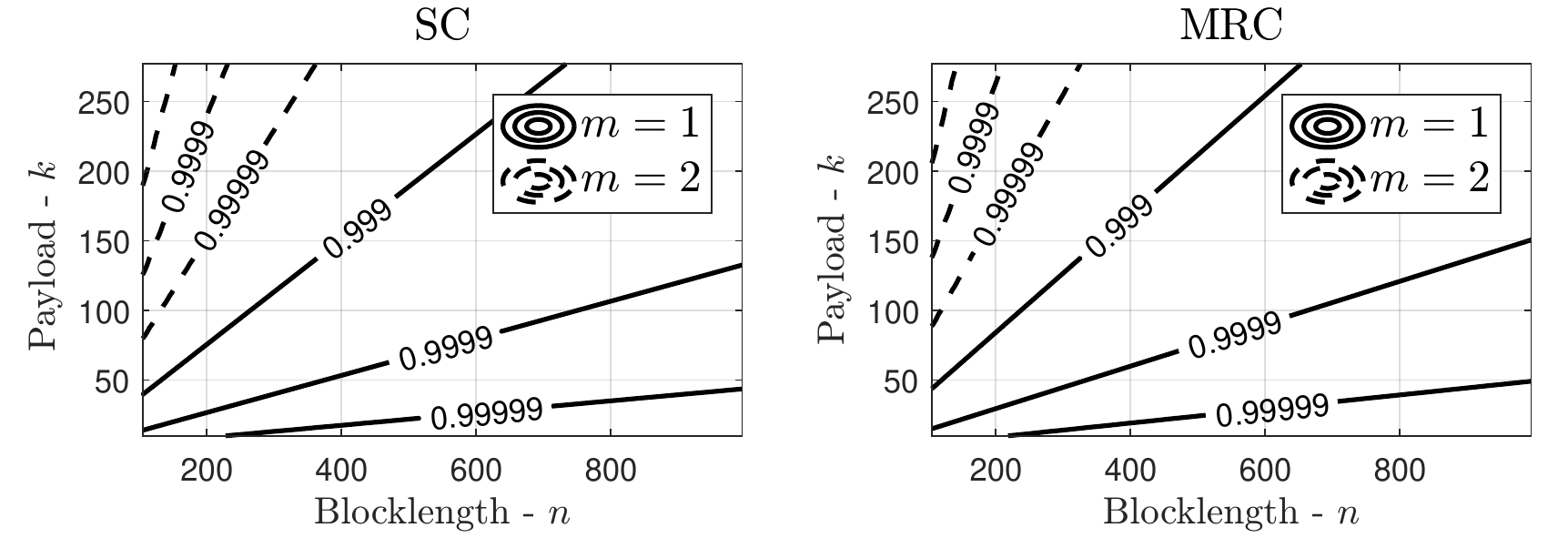}
	\vspace{0mm}
	\caption{Probability of successful transmission in SC and MRC protocols as a function of the blocklength $n$ and payload $k$ with average SNR as 10 dB.}\label{fig:contour}
\end{figure*}
\begin{figure}[!t]
	\centering
	\includegraphics[width=\columnwidth]{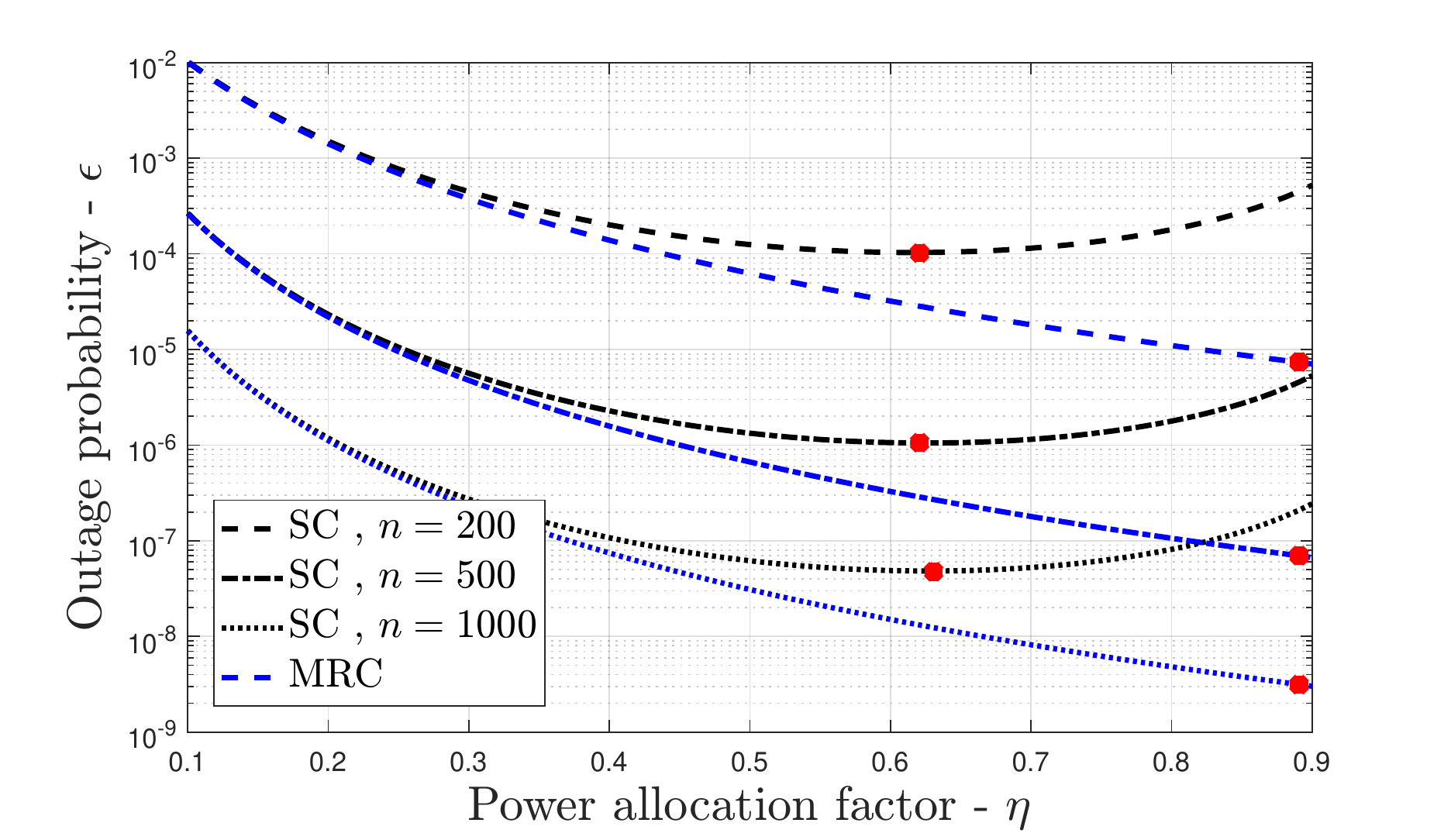}
	\vspace{-2mm}
	\caption{Impact of power allocation factor $\eta$ on the outage probability of SC and MRC with different blocklengths $n$. We consider $m=2$, $k=250$ and average SNR as $10$ dB. The points marked in red are the minimum outage probabilities of SC and MRC.}\label{fig:powerallocation}
\end{figure}
\subsection{URC requirements}
In Fig.\ref{fig:contour}, we examine the impact of coding rate on the performance of SC and MRC protocols with different values of $m$ under the UR region. We illustrate that MRC protocol outperforms SC under th FB regime. For instance, with $m=1$ and $n=300$, MRC provides $99.99 \%$ reliability with $k=45$, while SC supports equal reliability with $k=40$. Hence, SC is more affected by the coding rate growth compared to the MRC protocol. Moreover, we can clearly see that improving the LOS (larger $m$), considerably improves the reliability in wireless communications; thus, the size of the payload is not a critical concern anymore.

In Fig.\ref{fig:powerallocation}, we show the effect of power allocation factor on the outage probability. The optimal value of $\eta$, where the outage probability is minimized with different blocklengths, is approximately $0.62$ and $0.9$ for SC and MRC, respectively. From the figure, we see that in good LOS conditions more power is allocated to $S$, though $R$ is needed to provide the diversity order with respect to DT and thus achieve the required reliability target. 

In the following figure, we compare the delay in relaying schemes with equal power allocation $\eta =0.5$ and optimal power allocation $\eta=0.62$ and $\eta=0.9$ for SC and MRC, respectively. The choice of the minimum delay $\delta$ is in such a way that minimizes the outage probability constrained to a specific interval of interest which gives the optimal values of $n_S$ and $n_R$, and is a nonlinear optimization problem. The optimization problem numerically is solved via the Matlab function $fmincon$\footnote{Interior point algorithm is used to solve the nonlinear optimization problem~\cite{waltz2006interior}.}. Delay $\delta$ is equal to the symbol time $T_s$ multiplied with blocklength $n$ plus the number of retransmissions. For example, in Fig.\ref{fig:totallatency}, where $\eta=0.5$, future releases of LTE foresee a minimum symbol period of $T_s = 8.33\mu$s~\cite{yilmazultra}  and regardless of the retransmissions, $99.99 \%$ reliability is feasible via MRC protocol with $\delta = 3.5 $ms while latency reduces to $\delta = 1.66$ms with $99.9 \%$ reliability. On the other hand, SC protocol provides $99.99 \%$ reliability with $\delta = 3.9$ms when $n_S = n_R$. Note that DT would not be able to cope with such stringent latency requirements. Therefore, we should consider a good trade-off between the reliability and latency in cooperative communication protocols. 
Moreover, we indicate that applying the optimal values of $\eta$ shown in Fig.\ref{fig:powerallocation}, slightly reduce the latency in relaying schemes compared to the equal power allocation to $S$ and $R$ under the same assumptions; however, behavior of the protocols alters under the optimal $\eta$. Moreover, in SC with $n_S \neq n_R$ and $\eta = 0.5$, latency reduces in comparison to the case when $n_S = n_R$. For instance, when $n_s=227$ and $n_R=219$,  $99.99 \%$ reliability is feasible with $\delta=3.7$ms compared to the case when $n_S = n_R$. Hence, different parameters such as LOS, transmit power, coding rate of each link, delay and reliability should be all considered in designing the networks to support the URC requirements.
\section{Conclusions and Final Remarks} \label{sc:conc}
Herein we assess the relay communication under the finite blocklength regime. The performance of two relaying schemes, SC and MRC, are compared to the DT in terms of the outage probability under the Nakagami-$m$ fading distribution. We indicate that the relaying improves the communication efficiency which is more obvious at high SNR regime through providing higher order of diversity and consuming less transmit power. Moreover, we show the optimal number of channel uses for each link for all relaying schemes considered in this work in order to meet the latency constraint under the UR region. We show that lower coding rates, improve the reliability up to $99.999 \%$.
\begin{figure}[!t]
	\centering
	\includegraphics[width=\columnwidth, height =2in]{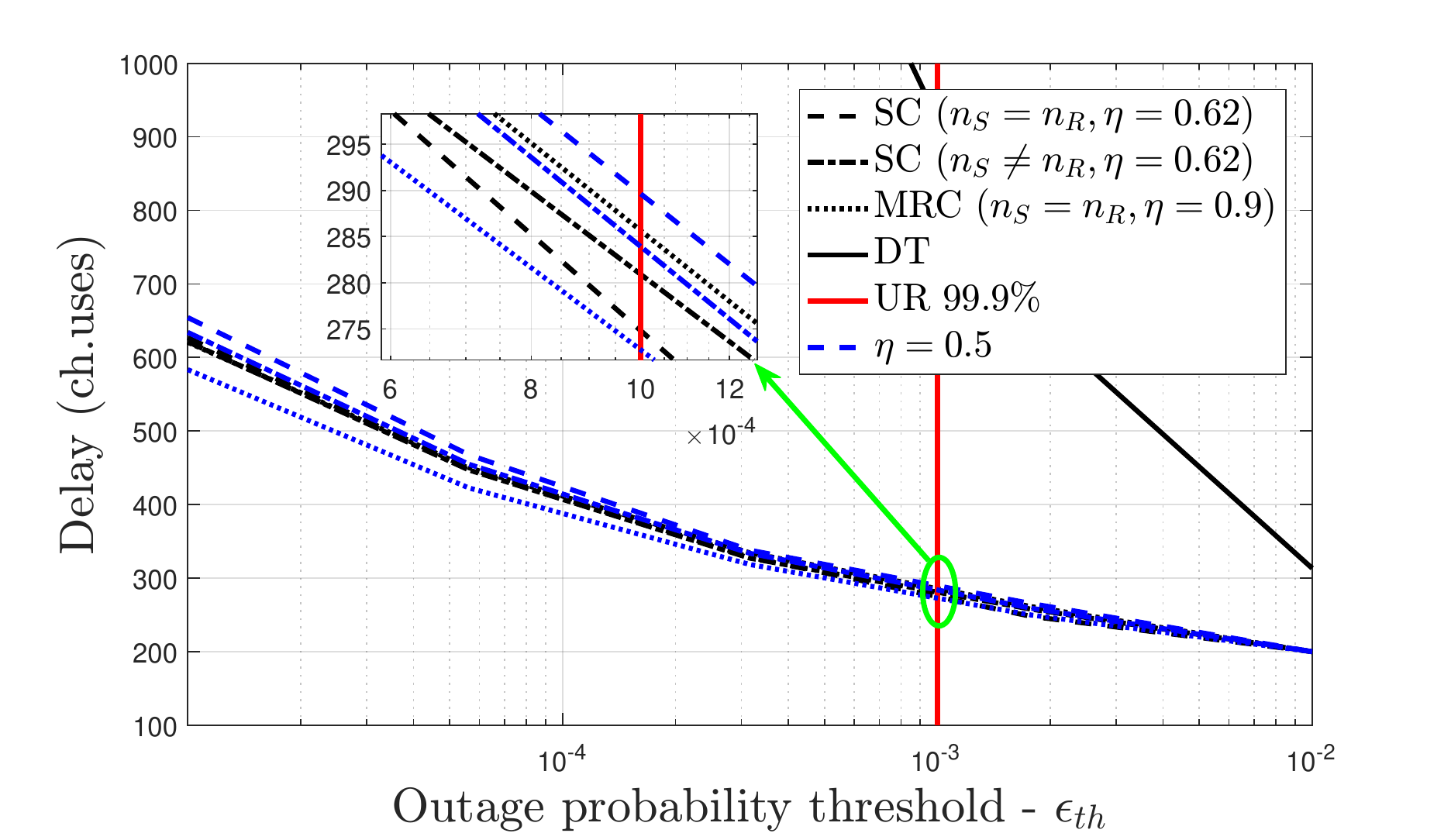}
	\vspace{-4mm}
	\caption{Total delay in terms of channel uses with equal power allocation ($\eta=0.5$) and $\eta=0.62$ and $\eta=0.9$ for SC and MRC, respectively with $m=2$, $k=250$ and average SNR as $10$ dB.}\label{fig:totallatency}
\end{figure}
Furthermore, we indicate that increasing the LOS, increases the reliability while the transmit power is kept low. In addition, we show that if $S$ and $R$  transmit with different coding rates in SC scenario, the latency reduces in comparison to the case of having equal coding rates for both direct and relay links.
We also prove the correctness of our analytical model via the numerical results.
All in all, we conclude that the relaying is a desirable technique in order to improve the system quality, particularly in M2M communications where low latency is a concern with short packet transmissions. In addition, relaying requires less transmit power to work under the UR region and increasing the LOS improves the reliability. More
over, MRC outperforms SC and DT in terms of power consumption and latency requirements. Finally, in our future work, we will work on the cooperative relaying under the imperfect CSI with FB coding.
\begin{figure*}[!t]
	\begin{equation}\label{eq:Outage_overall}
	\begin{split}
	\epsilon&\!=\!\dfrac{F_X{\left(\vartheta,m,\Omega\right)}\!+\!F_X{\left(\varrho,m,\Omega\right)}}{2}\!+\!\dfrac{\mu\theta\bigg(F_X{\left(\vartheta,m,\Omega\right)}\!-\!F_X{\left(\varrho,m,\Omega\right)}\bigg)}{\sqrt{2\pi}}\!-\!\dfrac{\mu}{\sqrt{2\pi}}\bigg(\bigg(\bigg(\dfrac{1}{\Omega}\bigg)^{-m} \Omega^{1\!-\!m}\bigg(\Gamma\left(1\!+\!m,\dfrac{m\varrho}{\Omega}\right)\\
	&\!-\!\Gamma\left(1\!+\!m, \dfrac{m\vartheta}{\Omega}\right)\bigg)\bigg)\bigg/ \Gamma\left(1\!+\!m\right)\bigg)\!=\!\dfrac{1}{2}\bigg(\operatorname{\cal P}\left(m,\dfrac{m\vartheta}{\Omega}\right)\!+\!\operatorname{\cal P}\left(m,\dfrac{m\varrho}{\Omega}\right)\bigg)\!+\!\dfrac{\mu m \theta}{\sqrt{2\pi}\Gamma\left(1\!+\!m\right)}\bigg(\Gamma\left(m,\dfrac{m\varrho}{\Omega}\right)\\
	&\!-\!\Gamma\left(m,\dfrac{m\vartheta}{\Omega}\right)\bigg)\!+\!\dfrac{\mu\Omega}{\sqrt{2\pi}\Gamma(1\!+\!m)}\bigg(\Gamma\left(1\!+\!m,\dfrac{m\vartheta}{\Omega}\right)\!-\!\Gamma\left(1\!+\!m,\dfrac{m\varrho}{\Omega}\right)\bigg)\!=\!\dfrac{\mu}{\sqrt{2\pi}}\bigg[\theta\bigg(\Gamma\left(m,\dfrac{m\varrho}{\Omega}\right)\!-\!\Gamma\left(m,\dfrac{m\vartheta}{\Omega}\right)\bigg)\\
	&\!+\!\Omega \bigg(\Gamma\left(1\!+\!m,\dfrac{m\vartheta}{\Omega}\right)\!-\!\Gamma\left(1\!+\!m,\dfrac{m\varrho}{\Omega}\right)\bigg)\bigg]\!+\!\dfrac{1}{2}\bigg(\operatorname{\cal P}\left(m,\dfrac{m\vartheta}{\Omega}\right)\!+\!\operatorname{\cal P}\left(m,\dfrac{m\varrho}{\Omega}\right)\bigg).\\
	\end{split}	
	\end{equation}
	\hrule
\end{figure*}
\section*{Appendix A}
\section*{Proof of the Proposition~\ref{propose1}}
\begin{proof}
	Let the PDF of the instantaneous SNR of the link denoted by the random variable $X \thicksim \mathsf{Gamma}(m,\Omega/m)$ be 
	\begin{align}\label{eq:pdf_gamma}
	f_X{(x|m,\tfrac{\Omega}{m})} =(\frac{\Omega}{m})^{-m}\frac{\exp(-\frac{mx}{\Omega})x^{m-1}}{\Gamma(m)} &\quad x>0
	\end{align}
	then, plugging (\ref{eq:W(t)}) and (\ref{eq:pdf_gamma}) into (\ref{outage_fading_linear}), we attain the outage probability equal to 
	\begin{equation}
	\epsilon\!=\! \int_{0}^{\varrho}f_X{\left(x|m,\Omega\right)}dx\!+\!\int_{\varrho}^{\vartheta}\left(\tfrac{1}{2}\!-\!\tfrac{\mu}{\sqrt{2\pi}}(x\!-\!\theta)\right)f_X{\left(x|m,\Omega\right)}dx. 
	\end{equation}
	After some algebraic manipulations we attain (\ref{eq:Outage_overall}) on top of the this page, where  $F_X(x;m,\Omega)\!=\!\operatorname{\cal P}(m,\tfrac{mx}{\Omega})$ and $\tfrac{m\mu\theta\Gamma(m,\tfrac{mx}{\Omega})}{\sqrt{2\pi}\Gamma\left(1\!+\!m\right)}$ can be simplified as $\tfrac{\mu\theta\Gamma\left(m,\tfrac{mx}{\Omega}\right)}{\sqrt{2\pi}\Gamma(m)}$.
\end{proof} 

\section*{Acknowledgments}
This work is partially supported by Aka Project SAFE (Grant no. 303532), and by Finnish Funding Agency for Technology and Innovation (Tekes), Bittium Wireless, Keysight Technologies Finland, Kyynel, MediaTek Wireless, Nokia Solutions and Networks, and CNPq (Brazil).


\end{document}